\theoremstyle{plain}
\theoremstyle{definition}
\newcommand{\eps}{\epsilon}
\newcommand{\ALGORITHM}{{\tt Algorithm}}
\newcommand{\SPAN}{\mbox{\bf disk-spanner}}
\newcommand{\NN}{\mbox{\sf NN}}
\def\SP{\mbox{\tiny SP}}
\def\DIR{\mbox{\tiny DIR}}
\begin{document}

\title[Relaxed spanners for directed disk graphs]{Relaxed spanners for directed disk graphs}

\author[lab1]{D. Peleg}{David Peleg}
\address[lab1]{Department of Computer Science and Applied Mathematics,
  \newline The Weizmann Institute of Science, Rehovot 76100, Israel}  
\email{david.peleg@weizmann.ac.il}  

\author[lab2]{L. Roditty}{Liam Roditty}
\address[lab2]{Department of Computer Science, Bar-Ilan University,
\newline Ramat-Gan 52900, Israel}   
\email{liamr@macs.biu.ac.il}  

\thanks{Thanks: Supported in part by grants from
the Minerva Foundation and the Israel Ministry of Science} 

\keywords{Spanners, Directed graphs} \subjclass{F.2 ANALYSIS OF
ALGORITHMS AND PROBLEM COMPLEXITY, F.2.0 General }


\begin{abstract}
Let $(V,\delta)$ be a finite metric space, where $V$ is a set of
$n$ points and $\delta$ is a distance function defined for these
points. Assume that $(V,\delta)$ has a constant doubling dimension
$d$ and assume that each point $p\in V$ has a disk of radius
$r(p)$ around it. The disk graph that corresponds to $V$ and
$r(\cdot)$ is a \emph{directed} graph $I(V,E,r)$, whose vertices
are the points of $V$ and whose edge set includes a directed edge
from $p$ to $q$ if $\delta(p,q)\leq r(p)$. In~\cite{PeRo08} we
presented an algorithm for constructing a $(1+\eps)$-spanner of
size $O(n/\eps^d \log M)$, where $M$ is the maximal radius $r(p)$.
The current paper presents two results. The first shows that the
spanner of~\cite{PeRo08} is essentially optimal, i.e., for metrics
of constant doubling dimension it is not possible to guarantee a
spanner whose size is independent of $M$. The second result shows
that by slightly relaxing the requirements and allowing a small
perturbation of the radius assignment, considerably better
spanners can be constructed. In particular, we show that if it is
allowed to use edges of the disk graph $I(V,E,r_{1+\eps})$, where
$r_{1+\eps}(p) = (1+\eps)\cdot r(p)$ for every $p\in V$, then it
is possible to get a $(1+\eps)$-spanner of size $O(n/\eps^d)$ for
$I(V,E,r)$. Our algorithm is simple and can be implemented
efficiently.
\end{abstract}

\maketitle

\section*{Introduction}\label{S:one}

This paper concerns efficient constructions of spanners for disk
graphs, an important family of directed graphs. A {\em spanner} is
essentially a skeleton of the graph, namely, a sparse spanning
subgraph that faithfully represents distances. Formally, a
subgraph $H$ of a graph $G$ is a $t$-spanner of $G$ if
$\delta_H(u,v) \leq t \cdot\delta_G(u,v)$ for every two nodes $u$
and $v$, where $\delta_{G'}(u,v)$ denotes the distance between $u$
and $v$ in $G'$. We refer to $t$ as the {\em stretch factor} of
the spanner. Graph spanners have received considerable attention
over the last two decades, and were used implicitly or explicitly
as key ingredients of various distributed applications. It is
known how to efficiently construct a $(2k-1)$-spanner of size
$O(n^{1+1/k})$ for every weighted undirected graph, and this
size-stretch tradeoff is conjectured to be tight. Baswana and
Sen~\cite{BaSe07} presented a linear time randomized algorithm for
computing such a spanner. In directed graphs, however, the
situation is different. No such general size-stretch tradeoff can
exist, as indicated by considering the example of a directed
bipartite graph $G$ in which all the edges are directed from one
side to the other; clearly, the only spanner of $G$ is $G$ itself,
as any spanner for $G$ must contain every edge.

The main difference between undirected and directed graphs is that
in undirected graphs the distances are symmetric, that is, a path
of a certain length from $u$ to $v$ can be used also from $v$ to
$u$. In directed graphs, however, the existence of a path from $u$
to $v$ does not imply anything on the distance in the opposite
direction from $v$ to $u$. Hence, in order to obtain a spanner for
a directed graph one must impose some restriction either on the
graph or on its distances. In order to bypass the problem of
asymmetric distances of directed graphs, Cowen and
Wagner~\cite{CoWa99} introduced the notion of {\em roundtrip
distances} in which the distance between $u$ and $v$ is composed
of the shortest path from $u$ to $v$ plus the shortest path from
$v$ to $u$. It is easy to see that under this definition distances
are symmetric also in directed graphs. It is shown by Cowen and
Wagner~\cite{CoWa99} and later by Roditty, Thorup and
Zwick~\cite{RoThZw08} that methods of path approximations from
undirected graphs can work using more ideas also in directed
graphs when roundtrip distances are considered. Bollob{\'a}s,
Coppersmith and Elkin~\cite{BoCoEl05} introduced the notion of
{\em distance preservers} and showed that they exist also in
directed graphs.

In~\cite{PeRo08} we presented a spanner construction for directed
graphs without symmetric distances. The restriction that we
imposed on the graph was that it must be a disk graph. More
formally, let $(V,\delta)$ be a finite metric space of constant
doubling dimension $d$, where $V$ is a set of $n$ points and
$\delta$ is a distance function defined for these points. A metric
is said to be of {\em constant doubling dimension} if a ball with
radius $r$ can be covered by at most a constant number of balls of
radius $r/2$. Every point $p\in V$ is assigned with a radius
$r(p)$. The disk graph that corresponds to $V$ and $r(\cdot)$ is a
directed graph $I(V,E,r)$, whose vertices are the points of $V$
and whose edge set includes a directed edge from $p$ to $q$ if $q$
is inside the disk of $p$, that is, $\delta(p,q)\leq r(p)$.
In~\cite{PeRo08} we presented an algorithm for constructing a
$(1+\eps)$-spanner with size $O(n/\eps^d \log M)$, where $M$ is
the maximal radius. In the case that we remove the radius
restriction the resulted graph is the complete undirected graph
where the weight of every edge is the distance between its
endpoint. In such a case it is possible to create
$(1+\eps)$-spanners of size $O(n/\eps^d)$,
see~\cite{HaMe06},~\cite{GaoGuiNgu04} and \cite{Ro07b} for more
details. Moreover, when the radii are all the same and the graph
is the unit disk graph then it is also possible to create
$(1+\eps)$-spanners of size $O(n/\eps^d)$,
see~\cite{gao05geometric}, \cite{PeRo08}.

As a result of that, a natural question is whether a spanner size
of $O(n/\eps^d \log M)$ in the case of directed disk graph is
indeed the best possible or maybe it is possible to get a spanner
of size $O(n/\eps^d)$ as in the cases of the complete graph and
the unit disk graph. For the case of the Euclidean metric space,
the answer turns out to be positive; a simple modification of the
Yao graph construction~\cite{Yao82} to fit the directed case
yields a directed spanner of size $O(n/\eps^d)$. However, the
question remains for more general metric spaces, and in particular
for the important family of metric spaces of bounded doubling
dimension.

In this paper we provide an answer for this question. We show that
our construction from~\cite{PeRo08} is essentially optimal by
providing a metric space with a constant doubling dimension and a
radius assignment whose corresponding disk graph has $\Omega(n^2)$
edges and none of its edges can be removed. (This does not
contradict our spanner construction from~\cite{PeRo08} as the
maximal radius in that case is $\Theta(2^n)$ and hence $\log M =
n$.)

This (essentially negative) optimality result motivates our main
interest in the current paper, which focuses on attempts to
slightly relax the assumptions of the model, in order to obtain
sparser spanner constructions. Indeed, it turns out that such
sparser spanner constructions are feasible under a suitably
relaxed model. Specifically, we demonstrate the fact that if a
small perturbation of the radius assignment is allowed, then a
$(1+\eps)$-spanner of size $O(n/\eps^d)$ is attainable. More
formally, we show that if we are allowed to use edges of the disk
graph $I(V,E,r_{1+\eps})$, where $r_{1+\eps}(p) = (1+\eps)\cdot
r(p)$ for every $p\in V$, then it is possible to get a
$(1+\eps)$-spanner of size $O(n/\eps^d)$ for the original disk
graph $I(V,E,r)$. This approach is similar in its nature to the
notation of \textit{emulators} introduced by Dor, Halperin and
Zwick~\cite{DoHaZw00}. An emulator of a graph may use any edge
that does not exist in the graph in order to approximate its
distances. It was used in the context of spanners with an additive
stretch.

The main application of disk graph spanners is for topology
control in the wireless ad hoc network model. In this model the
power required for transmitting from $p$ to $q$ is commonly taken
to be $\delta(p,q)^\alpha$, where $\delta(p,q)$ denotes the
distance between $p$ and $q$ and $\alpha$ is a constant typically
assumed to be between $2$ and $4$. Most of the ad hoc network
literature makes the assumption that the transmission range of all
nodes is identical, and consequently represents the network by a
\emph{unit disk graph} (UDG), namely, a graph in which two nodes
$p,q$ are adjacent if their distance satisfies $\delta(p,q) \le
1$. A unit disk graph can have as many as $O(n^2)$ edges.

There is an extensive body of literature on spanners of unit disk
graphs. Gao et al.~\cite{gao05geometric}, Wang and
Yang-Li~\cite{WaLi06a} and Yang-Li et al.~\cite{Yang2003}
considered the restricted Delaunay graph, whose worst-case stretch
is constant (larger than $1+\epsilon$). In~\cite{PeRo08} we showed
that any $(1+\epsilon)$-geometric spanner can be turned into a
$(1+\epsilon)$-UDG spanner.

Disk graphs are a natural generalization of unit disk graphs, that
provide an intermediate model between the complete graph and the
unit disk graph. Our size efficient spanner construction for disk
graphs whose radii are allowed to be slightly larger falls exactly
into the model of networks in which the stations can change their
transmission power. In particular our constriction implies that if
any station increases its transmission power by a small fraction
then a considerably improved topology can be built for the
network.

Our result has both practical and theoretical implications. From a
practical point of view it shows that, in certain scenarios,
extending the transmission radii even by a small factor can
significantly improve the overall quality of the network topology.
The result is also very intriguing from a theoretical standpoint,
as to the best of our knowledge, our relaxed spanner is the first
example of a spanner construction for directed graphs that enjoys
the same properties as the best constructions for undirected
graphs. (As mentioned above, it is easy to see that for general
directed graphs, it is not possible to have an algorithm that
given any directed graph produces a sparse spanner for it.) In
that sense, our result can be viewed as a significant step towards
gaining a better understanding for some of the fundamental
differences between directed and undirected graphs. Our result
also opens several new research directions in the relaxed model of
disk graphs. The most obvious research questions that arise are
whether it is possible to obtain other objects that are known to
exist in undirected graphs, such as compact routing schemes and
distance oracles, for disk graphs as well.

The rest of this paper is organized as follows. In the next
section
we present a metric space of constant doubling dimension in which
no edge can be removed from its corresponding disk graph. Section
\ref{s:DG-3} first describes a simple variant of our construction
from~\cite{PeRo08}, and then uses it together with new ideas in
order to obtain our new relaxed construction. Finally, in Section
\ref{s:con} we present some concluding remarks and open problems.

\section{Optimality of the spanner construction}
\label{s:Bad-Example}

In this section we build a disk graph $G$ with $2n$ vertices and
$\Omega(n^2)$ edges that is non-sparsifiable, namely, whose only
spanner is $G$ itself. In this graph $M=\Omega(2^n)$ hence our
spanner construction from~\cite{PeRo08} has a size of
$\Omega(n^2)$ and is essentially optimal.

Given a set of points, we present a distance function such that
for a given assignment of radii for the points any spanner of the
resulting disk graph must have $\Omega(n^2)$ edges. We then prove
that the underlying metric space has a constant doubling
dimension.

We partition the points into two types, $Y=\{y_1, \ldots, y_n\}$
and $X=\{x_1,\ldots, x_n\}$. We now define the distance function
$\delta(\cdot,\cdot)$ and the radii assignment $r(\cdot)$. The
main idea is to create a bipartite graph $G(X,Y,E)$ in which every
point of $Y$ is connected by a directed edge to all the points of
$X$.

The distance between any two points $x_i$ and $x_j$ is at least
$1+\eps$ for some small $0<\eps<1$ and the radius assignment of
every point $x_i$ is exactly $1$. Thus, there are no edges between
the points of $X$.

We now define the distances between the points of $Y$ and the
points of $X$. We start with the point $y_1$. Let
$\delta(y_1,x_i)=n$ for every $x_i \in X$ and let $r(y_1) = n$.
Place the points of $X$ on the boundary of a ball of radius $n$
centered at $y_1$ such that the distance between any two
consecutive points $x_i$ and $x_{i+1}$ is exactly $1+\eps$. This
is depicted in Figure~\ref{F-badexample}(a).

\begin{figure}[!t]
\begin{center}
\input{example-combined-tiny_m.pstex_t}
\end{center}
\caption{(a) First step in constructing the non-sparsifiable disk
graph $G$. (b) The non-sparsifiable disk graph $G$.}
\label{F-badexample}
\end{figure}

Turning to the point $y_2$, let $\delta(y_2,x_i)=2n$ for every
$x_i \in X$, $\delta(y_2,y_1)=2n+\eps$, and $r(y_2) = 2n$. Hence
there is an edge from $y_2$ to all the points of $X$, but no edge
connects $y_2$ and $y_1$.

We now turn to define the general case. Consider $y_i \in Y$. Let
$r(y_i)=2^{i-1}n$ and $\delta(y_i,x_j)=2^{i-1}n$ for every $x_j\in
X$. Let $\delta(y_i,y_{i-1})=2^{i-1}n+\eps$, and in general, for
every $0<j<i$ we have
\begin{equation}
\label{eq:dist} \delta(y_i,y_j) ~=~
\sum_{k=j}^{i-1}\delta(y_{k+1},y_k)~,
\end{equation}
implying that
\begin{equation}
\label{eq:dist2} \delta(y_i,y_j) ~<~ 2^in.
\end{equation}
It is easy to verify that $y_i$ has outgoing edges to the points
of $X$ (and to them only) and it does not have any incoming edges.
See Figure \ref{F-badexample}(b).

The resulting disk graph $G$ has $2n$ vertices and $\Omega(n^2)$
edges. Clearly, removing any edge from $G$ will increase the
distance between its head and its tail to infinity, and thus the
only spanner of $G$ is $G$ itself.

It is left to show that the metric space defined above for $G$ has
a constant doubling dimension. Given a metric space $(V,\delta)$,
its {\em doubling dimension} is defined to be the minimal value
$d$ such that every ball $B$ of radius $r$ in the metric space can
be covered by $2^{d}$ balls of radius $r/2$. In the next Theorem
we prove that for the metric space described above, $d$ is
constant.

\begin{theorem}
The metric space $(X\cup Y,\delta)$ defined for $G$ has a constant
doubling dimension.
\end{theorem}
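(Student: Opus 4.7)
The plan is to bound the doubling dimension by showing that every ball $B(p,r)$ in $(X \cup Y, \delta)$ can be covered by a constant number of balls of radius $r/2$. The natural split is $B = (B \cap X) \cup (B \cap Y)$, which I would handle separately.

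For the $X$-part, the points of $X$ lie on a circle of radius $n$ around $y_1$, so $(X, \delta|_X)$ is (isometric to) a subset of Euclidean $2$-space and inherits constant doubling dimension. The standard doubling argument then covers $B \cap X$ by $O(1)$ balls of radius $r/2$ centered at points of $X$.

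For the $Y$-part, the crucial structural fact is that $\delta(y_i, y_j) \geq 2^{\max(i,j)-1} n$ for $i \neq j$, so the $y_i$'s sit at geometrically growing scales. Letting $K$ be the largest index with $y_K \in B$, I would cover $y_K$ by the ball $B(y_K, r/2)$ (which is a singleton in $Y$ whenever $r < 2^K n$, since every other $y_j$ is at distance $\geq 2^{K-1}n > r/2$ from $y_K$), and then cover the remaining $y_i$'s of index strictly less than $K$ using either $B(p, r/2)$ (when $p \in X$, where each such $y_i$ satisfies $\delta(p,y_i) = 2^{i-1} n \leq r/2$) or a carefully chosen auxiliary ball centered at $y_{K-1}$, $y_{i_0-1}$, or a similar $y_j$ of low index (when $p \in Y$). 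Combining, $B(p,r)$ is covered by a constant number of half-radius balls and the doubling dimension $d$ is a constant.

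The main obstacle will be the $Y$-part when $p = y_{i_0} \in Y$: the distances from $p$ to the lower-index $y_j$'s ($j < i_0$) all cluster just above $2^{i_0-1}n$, so the set $B \cap Y$ depends on $r$ in a non-monotone way (as $r$ grows past $2^{i_0-1}n$, the $y_j$'s enter $B$ in an order dictated by the decreasing thresholds $(1+2^{-(i_0-j)})\cdot 2^{i_0-1}n$, not by $j$). A case analysis organized by the dyadic scale of $r$, together with the exponential spacing of the $y$-scales, confirms that only a fixed constant number of covering centers is ever needed, completing the proof.
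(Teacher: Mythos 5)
Your plan is essentially the paper's own proof: packing in the plane handles $B\cap X$, and for $B\cap Y$ the separation bound $\delta(y_i,y_j)\ge 2^{\max(i,j)-1}n$ ties $r$ to the scale of the top index $K$ present in $B$, after which a constant number of half-radius balls suffices -- one for each of the few top indices plus one ``auxiliary'' ball centered at a low-index point that swallows the entire remaining tail (the paper reaches the same case analysis after first re-centering $B$ at the maximal-index $y_j$ and doubling the radius). The one ingredient your sketch leaves implicit, and which the deferred case $p=y_{i_0}$ genuinely needs, is the matching \emph{upper} bound $\delta(y_i,y_j)<2^i n$ from the construction, which is what guarantees that the whole prefix $\{y_1,\dots,y_j\}$ fits inside a single ball of radius $2^{j}n\le r/2$ centered at $y_j$ for $j$ a constant below $K$; with that stated your argument closes exactly as the paper's does (your threshold formula $(1+2^{-(i_0-j)})2^{i_0-1}n$ is incorrect -- the true distance is $(2^{i_0}-2^{j})n+(i_0-j)\epsilon$, so the lower-index points are the \emph{farther} ones -- but this side remark is not load-bearing).
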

\begin{proof}
Let $B$ be a ball with an arbitrary radius $r$. We show that it is
possible to cover all the points of $X\cup Y$ within $B$ using a
constant number of balls whose radius is $r/2$. The proof is
divided into two cases.

{\bf Case a:} There is some $y_j \in Y$ within the ball $B$. (If
there is more than one such point, then let $y_j$ be the point
whose index is maximal.) Let $B'$ be a ball of radius $R=2r$
centered at $y_j$. Clearly $B\subset B'$, so $B'$ contains all the
points of $B$. In what follows we show that all the points of
$X\cup Y$ within $B'$ can be covered by a constant number of balls
of radius $r/2$. Let $y_i$ be the point within $B'$ whose index is
maximal.  We have to consider two possible scenarios. The first is
that $y_j=y_i$. This implies that $y_{j+1} \notin B'$, hence $R <
\delta(y_{j+1},y_j)=2^{j}n + \eps$. We now show that it is
possible to cover $B'$ by a constant number of balls of radius
$R/4$. If $R<2^{j-1}n$, then only $y_j$ is within $B'$ and it is
covered by a ball of radius $R/4$ centered at itself. If $2^{j-1}n
\leq R<2^{j-1}n+\eps$, then $B'$ contains all the points of $X$
and $y_j$. From packing arguments it follows that it is possible
to cover all the points of $X$ by a constant number of balls of
radius $n/4$, hence also by a constant number of balls of radius
$R\geq n$. The point $y_j$ itself is covered by a ball centered at
it. Finally, if $2^{j-1}n+\eps \leq R<2^{j}n+\eps$, then $R/4$ is
at least $2^{j-3}n+\eps/4$.
A ball centered at $y_{j-3}$ of radius $R/4$ covers every $y_k$
within $B'$, where $1\le k\leq j-3$, as $\delta(y_{j-3},y_k) \leq
2^{j-3}n$. Hence, we cover $Y\cap B'$ by balls of radius $R/4$
whose centers are $y_j$, $y_{j-1}$, $y_{j-2}$ and $y_{j-3}$. We
cover $X\cap B'$ as before. This completes the first scenario,
where $y_i = y_j$. Assume now that $y_i\neq y_j$. This implies
that $\delta(y_i,y_j)\leq R$ and that $R < \delta(y_{i+1},y_{j})$,
where the first inequality follows from the fact that $y_i\in B'$
and the second inequality follows from the fact that $y_i$ is the
point with maximal index inside $B'$, hence, $y_{i+1}\notin B'$.
As $\delta(y_{i},y_{i-1})\leq\delta(y_i,y_j)$, we get that
$2^{i-1}n+\eps \leq R$. Also, by (\ref{eq:dist2}),
$\delta(y_{i+1},y_{j})<2^{i+1}n$. We conclude that $2^{i-1}n \leq
R < 2^{i+1}n$ and that $R/4 \geq  2^{i-3}n$. A ball centered at
$y_{i-3}$ of radius $R/4$ covers every $y_k$ within $B'$, where
$k\leq i-3$, as $\delta(y_{i-3},y_k) \leq 2^{i-3}n$. Hence, we can
cover $B'\cap Y$ by balls of radius $R/4$ whose centers are $y_i$,
$y_{i-1}$, $y_{i-2}$ and $y_{i-3}$. We cover $X\cap B'$ as before.
This completes the first case.

{\bf Case b:} The ball $B$ does not contain any point from $Y$.
The points of $X$ are spread as appears in
Figure~\ref{F-badexample}(a), thus by standard packing arguments,
any ball that contains only points from $X$ is covered by a
constant number of balls of half the radius.
\end{proof}

\section{Improved spanner in the relaxed disk graph model}
\label{s:DG-3}

The (negative) optimality result from the previous section
motivates us to look for a slightly relaxed definition of disk
graphs in which it will still be possible to create a spanner of
size $O(n/\eps^d)$.

Let $(V,\delta)$ be a metric space of constant doubling dimension
$d$ with a radius assignment $r(\cdot)$ for its points and let
$I=(V,E,r)$ be its corresponding disk graph. Assume that we
multiply the radius assignment of every point by a factor of
$1+\eps$, for some $\eps>0$, and let $I'=(V,E',r_{1+\eps})$ be the
corresponding disk graph. It is easy to see that $E\subseteq E'$.
In this section we show that it is possible to create a
$(1+\eps)$-spanner of size $O(n/\eps^d)$ if we are allowed to use
edges of $I'$. As a first step we present a simple variant of our
$(1+\eps)$-spanner construction of size $O(n/\eps^d\log M)$
from~\cite{PeRo08}. This variation is needed in order to obtain
the efficient construction in the relaxed model which is presented
right afterwards.

\subsection{Spanners for general disk graphs} \label{s:DG}

Let $(V,\delta)$ be a metric space of constant doubling dimension
and assume that any point $p\in V$ is the center of a ball of
radius $r(p)$, where $r(p)$ is taken from the range $[1,M]$. In
this section we describe a simple variant of our construction
from~\cite{PeRo08}, which computes a $(1+\epsilon)$-spanner with
$O(n/\epsilon^{d}\log M)$ edges for a given disk graph. We then
use this variant, together with new ideas, in order to obtain (in
the next section) our main result, namely, a spanner with only
$O(n/\epsilon^{d})$ edges.

The spanner construction algorithm receives as input a directed
graph $I(V,E,r)$ and an arbitrarily small (constant) approximation
factor $\epsilon>0$, and constructs a set of spanner edges
$E_{\SP}^{\DIR}$, returning the spanner subgraph
$H^{\DIR}(V,E_{\SP}^{\DIR})$. The construction of the spanner is
based on a hierarchical partition of the points of $V$ that takes
into account the different radius of each point. The construction
operates as follows. Let $\alpha$ and $\beta$ be two small
constants depending on $\epsilon$, to be fixed later on. Assume
that the ball radii are scaled so that the smallest edge in the
disk graph is of weight $1$. Let $i$ be an integer from the range
$[0,\lfloor \log_{1+\alpha}M\rfloor]$ and let $M_i=M /
(1+\alpha)^i$. The edges of $I(V,E,r)$ are partitioned into
classes by length, letting $E(M_{i+1},M_i) = \{ (x,y) \mid M_{i+1}
\leq \delta(x,y) \leq M_i \}$. Let $\ell(x,y)$ be the level of the
edge $(x,y)$, that is, $\ell(x,y)=i$ such that $(x,y) \in
E(M_{i+1},M_i)$. Let $p$ be a point whose ball is of radius
$r(p)\in [M_{i+1},M_i]$. It follows that level $i$ is the first
level in which $p$ can have outgoing edges. We denote this level
by $\ell(p)$.

For every $i \in [0,\lfloor \log_{1+\alpha}M\rfloor]$, starting
from $i=0$, the edges of the class $E(M_{i+1},M_i)$ are considered
by the algorithm in a non-decreasing order. (Assume that in each
class the edges are sorted by their weight.) In each stage of the
construction we maintain a set of pivots $P_i$. Let $x\in V$ and
let $\NN(x,P_i)$ be the nearest neighbor of $x$ among the points
of $P_i$. For a pivot $p\in P_i$, define $\Gamma_i(p)= \{ x \mid x
\in V, \NN(x,P_i) = p, r(x)\geq \delta(x,p) \}$, namely, all the
points that have a directed edge to $p$ and $p$ is their nearest
neighbor from $P_i$. We refer to $\Gamma_i(p)$ as the {\em close
neighborhood} of $p$.

The algorithm is given in Figure~\ref{F-App-Alg}. Let $(x,y)$ be
an edge considered by the algorithm in the $i$th iteration. The
algorithm first checks whether $x$ or $y$ or both should be added
to the pivots set $P_i$. The main change with respect
to~\cite{PeRo08} is that if $y$ is assigned with a large enough
radius it might become a pivot when the edge $(x,y)$ is examined.
When considering the edge $(x,y)$, the algorithm acts according to
the following rule: If the distance from $x$ to its nearest
neighbor in $P_i$ is greater than $\beta M_{i+1}$ then $x$ is
added to $P_i$. If the distance from $y$ to its nearest neighbor
in $P_i$ is greater than $\beta M_{i+1}$ and the radius of $y$ is
at least $M_{i+1}$ then $y$ is added to $P_i$. To decide whether
the edge $(x,y)$ is added to the spanner, the following two cases
are considered. The first case is when $r(y) \geq M_{i+1}$. In
this case, if there is no edge from the close neighborhood of $x$
to the close neighborhood of $y$ then $(x,y)$ is added to the
spanner. The second case is when $r(y) < M_{i+1}$. In this case,
if there is no edge from the close neighborhood of $x$ to $y$ then
$(x,y)$ is added to the spanner. When $i$ reaches $\lfloor
\log_{1+\alpha}M\rfloor$, the algorithm handles all the edges that
belong to $E(M_{\lfloor \log_{1+\alpha}M\rfloor+1},M_{\lfloor
\log_{1+\alpha}M\rfloor})$. This includes also edges whose weight
is $1$, the minimal possible weight. The algorithm returns the
directed graph $H^{\DIR}(V,E_{\SP}^{\DIR})$.

In what follows we prove that for suitably chosen $\alpha$ and
$\beta$, $H^{\DIR}(V,E_{\SP}^{\DIR})$ is a $(1+\epsilon)$-spanner
with $O(n/\epsilon^{d}\log M)$ edges of the directed graph
$I(V,E,r)$.

\begin{figure}[t]
\begin{center}
\framebox{\parbox[t]{12.0in}{
\begin{tabbing}
\ALGORITHM$\;$ \SPAN$$ $(I(V,E,R),\epsilon)$\\[5pt]
$E_{\SP}^{\DIR} \gets \phi$\\
$P_0 \gets \phi$\\
for $\;\;$\=$i\gets 0$ to $\lfloor\log_{1+\alpha}M\rfloor$\\
\> for $\;$\= each $(x,y) \in E(M_{i+1},M_i)$ do\\
\>\> if\=\ $\delta(\NN(x,P_i),x)> \beta M_{i+1}$ then
$P_i \gets P_i \cup \{x\}$ \\
\>\> if\=\ $\delta(\NN(y,P_i),y)> \beta M_{i+1}\wedge r(y)\geq
M_{i+1}$ then
$P_i \gets P_i \cup \{y\}$ \\
\>\> if\=\ $r(y) \geq M_{i+1}$\\
\>\>\> if\=\ $\nexists (x',y')\in E_{\SP}^{\DIR}$ s.t. \=$x'\in
\Gamma_i(\NN(x,P_i))\wedge$
$y'\in\Gamma_i(\NN(y,P_i))$ \\
\>\>\>\> then $E_{\SP}^{\DIR} \gets E_{\SP}^{\DIR} \cup \{(x,y)\}$\\
\>\> if\=\ $r(y) < M_{i+1}$\\
\>\>\> if\=\ $\nexists (x',y)\in E_{\SP}^{\DIR}$
s.t. $x'\in \Gamma_i(\NN(x,P_i))$ \\
\>\>\>\> then $E_{\SP}^{\DIR} \gets E_{\SP}^{\DIR} \cup \{(x,y)\}$\\
\> $P_{i+1} \gets P_i$\\
return $H^{\DIR}(V,E_{\SP}^{\DIR})$
\end{tabbing}
}}
\end{center}
\caption{ A high level implementation of the spanner construction
algorithm for \emph{general} disk graphs }\label{F-App-Alg}
\end{figure}

\begin{lemma}[Stretch]
Let $\epsilon > 0$, set $\alpha = \beta < \epsilon/6$ and let
$H=H^{\DIR}(V,E_{\SP}^{\DIR})$ be the graph returned by Algorithm
$\SPAN(I(V,E,r),\epsilon)$. If $(x,y)\in E$ then $\delta_H(x,y)
\leq (1+\epsilon)\delta(x,y)$.
\end{lemma}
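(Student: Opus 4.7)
The plan is to proceed by strong induction on $w=\delta(x,y)$, showing for every edge $(x,y)\in E$ that $\delta_H(x,y)\le(1+\epsilon)w$. Fix such an edge, let $i=\ell(x,y)$ so that $M_{i+1}\le w\le M_i$, and assume $(x,y)\notin E_{\SP}^{\DIR}$ (otherwise we are done). The structural fact that drives everything is the invariant \emph{every pivot $p\in P_i$ satisfies $r(p)\ge M_{i+1}$}, which I would establish by tracking the level $j\le i$ at which $p$ was first added: the admission rule forces either that $p$ is the tail of an edge in $E(M_{j+1},M_j)$ (so $r(p)\ge M_{j+1}$) or that the explicit side-condition $r(p)\ge M_{j+1}$ is satisfied, and in both cases $M_{j+1}\ge M_{i+1}$. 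The main consequence is that for any $p\in P_i$ and any $q$ with $\delta(p,q)\le M_{i+1}$, the directed edge $(p,q)$ lies in $E$, which is what will allow detours \emph{from} pivots to their close neighbors.

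Let $p_x=\NN(x,P_i)$ and $p_y=\NN(y,P_i)$. By the pivot-update rule, either $x\in P_i$ or $\delta(x,p_x)\le\beta M_{i+1}$; since $r(x)\ge w\ge M_{i+1}$, this yields $x\in\Gamma_i(p_x)$. In Case 1 of the algorithm, where $r(y)\ge M_{i+1}$, the analogous argument gives $\delta(y,p_y)\le\beta M_{i+1}$ and $y\in\Gamma_i(p_y)$, and the fact that the algorithm did not add $(x,y)$ produces some $(x',y')\in E_{\SP}^{\DIR}$ with $x'\in\Gamma_i(p_x)$, $y'\in\Gamma_i(p_y)$. I would then build the path $x\rightsquigarrow p_x\rightsquigarrow x'\to y'\rightsquigarrow p_y\rightsquigarrow y$: the four wavy sub-paths correspond to the edges $(x,p_x),(p_x,x'),(y',p_y),(p_y,y)\in E$, each of length at most $\beta M_{i+1}<w$ and therefore $(1+\epsilon)$-approximable in $H$ by the inductive hypothesis; the middle hop uses the spanner edge $(x',y')$ itself. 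Bounding $\delta(x',y')\le w+4\beta M_{i+1}$ by the triangle inequality and using $M_{i+1}\le w$, the total length comes out as $w+O(\beta)\cdot w$, which is at most $(1+\epsilon)w$ once $\alpha=\beta$ is taken small enough (the stated $\epsilon/6$ being a convenient value).

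Case 2, where $r(y)<M_{i+1}$, is strictly easier: the algorithm's alternative provides some $(x',y)\in E_{\SP}^{\DIR}$ with $x'\in\Gamma_i(p_x)$, and the routing is just $x\rightsquigarrow p_x\rightsquigarrow x'\to y$, needing the inductive hypothesis only for the two short sub-paths on the $x$-side. The main obstacle, and the reason for the side-condition $r(y)\ge M_{i+1}$ in the $y$-pivot rule, is precisely securing the structural invariant above. Without it one could not guarantee the backward edges from $p_x$ and $p_y$ to their close neighbors, since the disk graph only provides edges \emph{into} the pivot; a detour on the target side would then fail. Once the invariant is in place, the remainder is a routine telescoping of $(1+\epsilon)$-approximate sub-paths together with the bookkeeping step in which the constants $\alpha=\beta<\epsilon/6$ are plugged in to absorb the additive $O(\beta M_{i+1})$ overhead into a multiplicative $(1+\epsilon)$ factor.
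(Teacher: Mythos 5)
Your proof is correct and follows essentially the same argument as the paper: induction on edge length (the paper inducts on the level $i$, which orders the edges in the same way), using the blocking spanner edge between the two close neighborhoods and handling the short detour edges by the inductive hypothesis, then absorbing the additive $O(\beta)M_{i+1}$ overhead via the choice of $\alpha=\beta$. The only real difference is cosmetic: you route the detours explicitly through the pivots $p_x,p_y$ and justify the backward edges by your pivot-radius invariant $r(p)\ge M_{i+1}$, whereas the paper uses the direct edges $(x,\hat{x})$ and $(\hat{y},y)$ and asserts $r(\hat{y})\ge M_{i+1}$ outright; both yield the same kind of bound.
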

\begin{proof}
Recall that the radii are scaled so that the shortest edge is of
weight $1$. We prove that every directed edge of an arbitrary node
$x\in V$ is approximated with $1+\epsilon$ stretch. Let $i\in
[0,\lfloor\log_{1+\alpha}M\rfloor$]. The proof is by induction on
$i$. For a given node $x$, the base of the induction is the
maximal value of $i$ in which $x$ has an edge in $E(M_{i+1},M_i)$.
Let $j$ be this value for $x$, that is, the set $E(M_{j+1},M_j)$
contains the shortest edge that touches $x$. Every other node is
at distance at least $M_{j+1}$ away from $x$, hence $x$ is a pivot
at this stage and every edge that touches $x$ from the set
$E(M_{j+1},M_j)$ is added to $E_{\SP}^{\DIR}$.

Let $(x,y) \in E(M_{i+1},M_i)$ for some $i<j$ and let
$p=\NN(x,P_i)$. Assume that $r(y)\geq M_{i+1}$ and let
$q=\NN(y,P_i)$. It follows from definition that $\delta(x,p) \leq
\beta M_{i+1}$ and $\delta(y,q) \leq \beta M_{i+1}$.

If the edge $(x,y)$ is not in the spanner, then there must be an
edge $(\hat{x},\hat{y})\in E_{\SP}^{\DIR}$, where $\hat{x}\in
\Gamma_i(p)$ and $\hat{y}\in \Gamma_i(q)$. The crucial observation
is that the radius of $x$ and $\hat{y}$ is at least $M_{i+1}$. By
the choice of $\beta$, it follows that $2\beta M_{i+1}< M_{i+1}$
and $(x,\hat{x}),(\hat{y},y)\in E$. Thus, there is a (directed)
path from $x$ to $y$ of the form $\langle
x,\hat{x},\hat{y},y\rangle$ whose length is $4\beta M_{i+1}+M_i$.
However, only its middle edge, $(\hat{x},\hat{y})$, is in
$E_{\SP}^{\DIR}$. The length of this edge is bounded by the length
of the edge $(x,y)$ since the algorithm picked the minimal edge
that connects between the neighborhoods. This implies that the
length of $(\hat{x},\hat{y})$ is at most $M_i$.

By the inductive hypothesis, the edges $(x,\hat{ x})$ and
$(\hat{y},y)$ whose weight is at most $2\beta M_{i+1}$ are
approximated with $1+\epsilon$ stretch. Thus, there is a path in
the spanner from $x$ to $y$ whose length is at most
$(1+\epsilon)\delta(x,\hat{x})+M_i+(1+\epsilon)\delta(\hat{y},y),$
and this can be bounded by
$$(1+\epsilon)4\beta M_{i+1}+M_i ~=~
((1+\epsilon)4\beta +(1+\alpha)) M_{i+1}.$$ As the edge $(x,y) \in
E(M_{i+1},M_i)$ it follows that $\delta(x,y)\geq M_{i+1}$. It
remains to prove that $1 + 4\epsilon\beta + 4\beta + \alpha \leq
1+\epsilon$, which follows directly from the choice of $\alpha$
and $\beta$.

If $r(y) < M_{i+1}$ then there must be an edge $(\hat{x},y)\in
E_{\SP}^{\DIR}$, where $\hat{x}\in \Gamma_i(p)$. Following similar
arguments to those used above it can be shown that there is a path
in the spanner from $x$ to $y$ of length at most
$(1+\epsilon)2\beta M_{i+1}+M_i$ and hence bounded by
$(1+\epsilon)M_{i+1}$.
\end{proof}

\paragraph{The size of the spanner.}
We now prove that the size of the spanner
$H^{\DIR}(V,E_{\SP}^{\DIR})$ is $O(n/\epsilon^d \log M)$. As a
first step, we state the following well-known lemma,
cf.~\cite{GaoGuiNgu04}.

\begin{lemma}\label{L-pack}[Packing Lemma]
If all points in a set $U \in \mathbb{R}^d$ are at least $r$ apart
from each other, then there are at most $(2R/r + 1)^d$ points in
$U$ within any ball $X$ of radius $R$.
\end{lemma}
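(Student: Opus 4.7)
The plan is to use the classical volume-packing argument in $\mathbb{R}^d$. First I would associate to each point $p$ in $U \cap X$ the open Euclidean ball $B(p, r/2)$. The key observation to exploit is that the pairwise separation condition $\delta(p,q) \ge r$ for any two points $p \neq q$ in $U$ immediately forces the balls $B(p, r/2)$ and $B(q, r/2)$ to be disjoint, by the triangle inequality: any common point would be within $r/2$ of both centers, contradicting $\delta(p,q) \ge r$.

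Next I would argue that each small ball $B(p, r/2)$ is contained in the enlarged ball $X'$ of radius $R + r/2$ concentric with $X$. Indeed, since $p \in X$, every point within Euclidean distance $r/2$ of $p$ lies within distance $R + r/2$ from the center of $X$. This sets up a clean containment of the disjoint union of the small balls inside a single ball of controlled radius.

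Finally I would carry out the volume comparison. Let $V_d$ denote the volume of the unit ball in $\mathbb{R}^d$, so the volume of each $B(p, r/2)$ is $V_d (r/2)^d$ and the volume of $X'$ is $V_d (R + r/2)^d$. Denoting by $N$ the number of points in $U \cap X$, disjointness and containment yield
\[
N \cdot V_d (r/2)^d \;\le\; V_d (R + r/2)^d,
\]
from which the stated bound $N \le (2R/r + 1)^d$ follows after dividing by $V_d (r/2)^d$.

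The only genuinely delicate point is the boundary case $R < r/2$, where $U \cap X$ can contain at most one point, but this is consistent with the bound since $(2R/r + 1)^d \ge 1$. Beyond that, the argument is routine. (If one wanted to state the lemma for an abstract metric of doubling dimension $d$ rather than for $\mathbb{R}^d$, the volume argument would be replaced by a recursive cover-and-count using the doubling property, applied to a ball of radius $R + r/2$ until the cover radius drops below $r/2$; this yields the same asymptotic bound up to a constant, which suffices for the applications in this section.)
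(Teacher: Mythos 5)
Your proof is correct and is the standard volume-packing argument; the paper itself gives no proof of this lemma, simply citing it as well known (cf.\ the reference to Gao et al.), and the argument you supply is exactly the one that reference and the folklore rely on. The small-ball disjointness, the containment in the concentric ball of radius $R+r/2$, and the final division all check out, so nothing further is needed.
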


The next lemma establishes a bound on the number of incoming
spanner edges that a point may be assigned on stage $i \in
[0,\lfloor\log_{1+\alpha}M\rfloor]$ of the algorithm.

\begin{lemma}\label{L-level-size}
Let $i \in [0,\lfloor\log_{1+\alpha}M\rfloor]$ and let $y\in V$.
The total number of incoming edges of $y$ that were added to the
spanner on stage $i$ is $O(\epsilon^{-d})$.
\end{lemma}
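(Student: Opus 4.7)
The plan is to combine a packing argument (to bound the number of candidate ``source'' pivots near $y$) with the algorithmic rule (which guarantees that each such pivot accounts for at most one new incoming edge to $y$ during stage $i$). Fix $y$ and stage $i$, and consider an arbitrary edge $(x,y)\in E(M_{i+1},M_i)$ that the algorithm inserts into $E_{\SP}^{\DIR}$; let $p_x = \NN(x,P_i)$ (after the possible insertion of $x$ itself into $P_i$).

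I would first establish that every pair of pivots in $P_i$ is at distance strictly greater than $\beta M_{i+1}$. This follows by induction on the order of insertion: a point is added to $P_{i'}$ (for some $i'\leq i$) only if its distance to the currently nearest pivot exceeds $\beta M_{i'+1}$, and $M_{i'+1} \ge M_{i+1}$ because $M_k$ is decreasing in $k$. Next, using $\delta(x,y)\leq M_i$ together with $\delta(x,p_x)\leq \beta M_{i+1}\leq \beta M_i$, the triangle inequality yields $\delta(p_x,y)\leq (1+\beta)M_i$. Hence every candidate pivot $p_x$ lies in the ball of radius $(1+\beta)M_i$ around $y$. Applying the Packing Lemma to this ball with separation $\beta M_{i+1}=\beta M_i/(1+\alpha)$ gives at most
\[
\left(\frac{2(1+\beta)(1+\alpha)}{\beta}+1\right)^{\!d} \;=\; O(\epsilon^{-d})
\]
such pivots, using $\alpha=\beta<\epsilon/6$.

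It then remains to show that for each fixed candidate pivot $p$, at most one edge $(x,y)$ with $\NN(x,P_i)=p$ is admitted during stage $i$. I would split into the two cases of the algorithm. If $r(y)\geq M_{i+1}$, then $\delta(y,\NN(y,P_i))\leq \beta M_{i+1}< M_{i+1}\leq r(y)$, so $y\in \Gamma_i(q_y)$ for $q_y=\NN(y,P_i)$. The insertion rule then refuses any further edge from $\Gamma_i(p)$ to $\Gamma_i(q_y)$ once one such edge is already present; in particular it refuses any further edge of the form $(x',y)$ with $\NN(x',P_i)=p$. If $r(y)< M_{i+1}$, the second branch of the rule directly forbids inserting more than one spanner edge into $y$ from $\Gamma_i(p)$. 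Either way, at most one incoming edge to $y$ is charged to each pivot $p$, and multiplying by the $O(\epsilon^{-d})$ pivots bounds completes the argument.

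The main subtlety I expect is the case analysis in the last step, specifically verifying the membership $y\in\Gamma_i(q_y)$ in Case~1, since the ``one per pivot'' charge in that case depends on $y$ actually lying inside the close neighborhood of its own nearest pivot. The choice $\beta<\epsilon/6<1$ and the assumption $r(y)\ge M_{i+1}$ are precisely what make this membership hold, so the bound on $\alpha,\beta$ dictated by the stretch analysis is exactly the regime needed here. Everything else is a direct application of the Packing Lemma, with the scaling constants absorbed into the $O(\epsilon^{-d})$.
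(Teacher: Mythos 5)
Your proof is correct and follows essentially the same route as the paper's: associate each incoming edge of $y$ at stage $i$ to the pivot $\NN(x,P_i)$ of its source, argue that the algorithm's insertion rule admits at most one such edge per pivot, and then pack the $\beta M_{i+1}$-separated pivots into a ball of radius $O(M_i)$ around $y$ via the Packing Lemma. Your write-up is somewhat more careful than the paper's (explicitly checking the pivot separation across levels and the membership $y\in\Gamma_i(\NN(y,P_i))$ when $r(y)\ge M_{i+1}$), but the argument and the resulting $O(\epsilon^{-d})$ bound are the same.
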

\begin{proof}
Let $(x,y)$ be a spanner edge and let $\NN(x,P_i) = p$. We
associate $(x,y)$ to $p$. From the spanner construction algorithm
it follows that this is the only incoming edge of $y$ whose source
is in $\Gamma_i(p)$. Thus, this is the only incoming edge of $y$
which is associated to $p$. Now consider all the incoming edges of
$y$ on stage $i$. The source of each of these edges is associated
to a unique pivot within distance of at most $M_i+2\beta M_{i+1}$
away from $y$ and any two pivots are $\beta M_{i+1}$ apart from
each other. Using Lemma~\ref{L-pack}, we get that the number of
edges entering $y$ is $(\frac{M_i+2\beta M_{i+1}}{\beta M_{i+1}} +
1)^d=((1+\alpha)/\beta+3)^d=O(\epsilon^{-d})$.
\end{proof}

It follows from the above lemma that the total number of edges
that were added to $E_{\SP}^{\DIR}$ in the main loop is
$O(n/\epsilon^d \log M)$. The total cost of the construction
algorithm is $O(m\log n)$. For more details on the construction
time see~\cite{PeRo08}.

\subsection{Spanner for relaxed disk graphs}

Let $(V,\delta)$ be a metric space of constant doubling dimension
$d$ with a radius assignment $r(\cdot)$ for its points and let
$I=(V,E,r)$ be its corresponding disk graph. Assume that we
multiply the radius assignment of every point by a factor of
$1+\eps$, for some $\eps>0$, and let $I'=(V,E',r_{1+\eps})$ be the
corresponding disk graph. In this section we show that it is
possible to create a $(1+\eps)$-spanner of $I$ of size
$O(n/\eps^d)$ if we are allowed to use edges of $I'$.

Our construction consists of two stages: a building stage and a
pruning stage. The building stage creates two spanners, $H$ and
$H'$, using the algorithm of Section~\ref{s:DG}, where $H$ is the
spanner of $I$ and $H'$ is the spanner of $I'$. In the pruning
stage we prune the union of these two spanners. Throughout the
pruning stage we use the radius assignment of each point before
the increase. Let $q\in V$ and let $\ell(q)$ be the first level in
which $q$ can have outgoing edges, that is, $r(q)\in
[M_{\ell(q)+1},M_{\ell(q)}]$ (recall that as the levels get larger
the edges get shorter). In the pruning stage we only prune
incoming edges of $q$ whose level is below $\ell(q)$. In other
words, we do not touch the incoming edges of $q$ that are shorter
than the radius of $q$. The pruning is done as follow. Let $\gamma
= \log_{1+\alpha}{1/\beta}+1$. We keep in the spanner the incoming
edges of $q$ that come from the first $4\gamma$ different levels
below $\ell(q)$.

Let $\hat{H}$ be the resulting spanner and let $\hat{E}$ be the
remaining set of edges after the pruning step. In the remainder of
this section we show that the size of $\hat{H}$ is
$O(n/\epsilon^d)$ and its stretch with respect to the distances in
$I(V,E,r)$ is $1+\eps$. We start by showing that the size of
$\hat{H}$ is $O(n/\epsilon^d)$. Notice that the first part of the
proof below is possible only due to the change we have done in the
previous section to our spanner construction from~\cite{PeRo08}.
Roughly speaking, given an edge $(p,q) \in E$ that is shorter than
$r(q)$ we use pivot selection also on $q$'s side (and not only on
$p$'s) to sparisify the graph. This allows us to deal separately
with edges of $q$ of length larger than $r(q)$ and those of length
smaller than $r(q)$.

\begin{lemma}\label{L-smaller}
$|\hat{E}| = O(n/\epsilon^d)$.
\end{lemma}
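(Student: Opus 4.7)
The plan is to separate $\hat E$ according to the pruning rule and bound the two parts independently. For each vertex $q$, I distinguish the incoming edges of $q$ at levels $\ge \ell(q)$ (the ``shorter than $r(q)$'' Type A edges, which the pruning leaves intact) from the incoming edges of $q$ at the top $4\gamma$ distinct levels below $\ell(q)$ (the surviving Type B edges), so that $\hat E = \bigcup_q (\hat E_q^A \cup \hat E_q^B)$ is a disjoint decomposition, and it suffices to bound each part.

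For the Type B edges, each vertex contributes edges from at most $4\gamma$ distinct levels, and Lemma~\ref{L-level-size}, applied separately to $H$ and $H'$, bounds the number of incoming edges of $q$ added at each such level by $O(\eps^{-d})$. Hence $|\hat E_q^B| = O(\gamma\eps^{-d})$, and summing yields $O(n\gamma/\eps^d) = O(n/\eps^d)$, with the quantity $\gamma = \log_{1+\alpha}(1/\beta)+1$ absorbed into the implicit constant. The interesting part is the Type A bound, where the naive per-stage count of Lemma~\ref{L-level-size}, summed over all $\Theta(\log M)$ potentially relevant levels, is too weak. Here I would invoke the crucial modification to \SPAN{} made in Section~\ref{s:DG}: whenever the processed edge $(p,q)$ satisfies $r(q)\ge M_{i+1}$ (precisely the Type A regime), pivot selection is also performed on $q$'s side, so $q$ enters the pivot set $P_i$ at the first Type A level where it has an incoming edge. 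From that point on, the rule for adding a Type A incoming edge $(p,q)$ to $E_{\SP}^{\DIR}$ requires that no prior edge from $\Gamma_i(\NN(p,P_i))$ to $\Gamma_i(q)$ exists in $E_{\SP}^{\DIR}$. Since every Type A source lies within a ball of radius at most $M_{\ell(q)}\le(1+\alpha)r(q)$ around $q$, a doubling-dimension packing argument will bound the number of distinct source-pivot cells that can ever contribute such an edge to $q$ by $O(\eps^{-d})$ in total (not per stage), and the identical analysis on $H'$ contributes another $O(\eps^{-d})$. Summing over $q$ then yields $\sum_q |\hat E_q^A| = O(n/\eps^d)$.

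The main obstacle in this plan is exactly the last claim about Type A: showing that the per-stage $O(\eps^{-d})$ bound does not accumulate a $\log M$ factor when summed over stages. The delicate technical point is to argue that an edge representative for some pair $(\Gamma_{i_1}(\NN(p,P_{i_1})),\,\Gamma_{i_1}(q))$ placed in $E_{\SP}^{\DIR}$ at a coarser stage $i_1$ continues to block the addition of further edges for related source-pivot cells at finer stages $i_2>i_1$, so that refinements of the pivot partition introduce only genuinely new source-pivot cells incident to $q$. This is where the ``pivot-on-both-sides'' modification is essential: it turns $q$ into a permanent pivot whose close neighborhood $\Gamma_i(q)$ serves as a stable anchor, allowing the doubling-dimension packing bound to be applied globally to the union of all source-pivot cells near $q$ that ever contribute an edge, rather than separately at each of the $\Theta(\log M)$ stages.
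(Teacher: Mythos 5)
Your decomposition into edges at levels $\ge\ell(q)$ (untouched by pruning) and edges from the surviving $4\gamma$ levels below $\ell(q)$ matches the paper's case split, and your bound of $O(\gamma/\eps^d)$ per vertex for the second class via Lemma~\ref{L-level-size} is exactly the paper's argument. The gap is in the first class, and you have correctly located it yourself: your claimed resolution --- that a single doubling-dimension packing argument bounds by $O(\eps^{-d})$ \emph{in total} the source-pivot cells that ever contribute an incoming edge to $q$ across all levels $\ge\ell(q)$ --- does not hold as stated. The sources at level $i$ lie in an annulus of radius $\approx M_i$ around $q$ and their pivots are only $\beta M_{i+1}$-separated; as $i$ grows the annuli shrink and the pivot sets refine geometrically, so the cells $\Gamma_{i_2}(\NN(p,P_{i_2}))$ at a finer stage are genuinely different from (and not blocked by) those at a coarser stage $i_1$: the algorithm's test at stage $i_2$ asks for an existing edge between the stage-$i_2$ neighborhoods, not the stage-$i_1$ ones. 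Packing inside the fixed ball of radius $M_{\ell(q)}$ gives no uniform separation to pack against, so this route reintroduces the $\log M$ factor you are trying to avoid. Also, the ``pivot-on-both-sides'' modification does not make $q$ a ``permanent stable anchor'': it only guarantees some pivot within $\beta M_{i+1}$ of $q$ at each level, and that pivot (and its neighborhood) changes with $i$.

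The paper closes this gap with a different charging scheme, which is the one idea your proposal is missing. For a surviving edge $(p,q)$ at level $i=\ell(p,q)>\ell(q)$, set $x=\NN(p,P_i)$ and $y=\NN(q,P_i)$; then $\delta(x,y)\le 2\beta M_{i+1}+M_i$, whereas any two pivots already present at level $i-2\gamma$ are at least $\beta M_{i-2\gamma+1}=M_{i+1}(1+\alpha)^2/\beta$ apart. Hence $x$ and $y$ cannot both have been pivots $2\gamma$ levels earlier, so at least one of them first became a pivot at some level $j\in(i-2\gamma,i]$. The edge is charged to that point, and a point that first becomes a pivot at level $j$ can only absorb such charges from levels $[j,j+2\gamma]$, with $O(\eps^{-d})$ charges per level by the packing argument of Lemma~\ref{L-level-size}. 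Each point is charged $O(\gamma/\eps^d)$ times over its entire lifetime, which gives the claimed global bound without any $\log M$ dependence. You would need to supply this ``charge to the newly created pivot'' argument (or an equivalent) to complete your Type~A bound.
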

\begin{proof}
Let $(p,q)$ be a spanner edge that survived the pruning step.
There are two possible cases to consider.

The first case is that $\ell(p,q) > \ell(q)$. Let $i=\ell(p,q)$
and let $x=\NN(p,P_i)$ and $y=\NN(q,P_i)$. By packing
considerations similar to Lemma~\ref{L-level-size} it follows that
the total number of edges at level $i$ that connects between two
pivots as the edge $(p,q)$ that are associated with $x$ (and with
$y$) is $O(1/\epsilon^d)$. The distance between $x$ and $y$ is at
most $2\beta M_{i+1}+M_i$, therefore at level $i-2\gamma$ either
$x$ or $y$ are no longer pivots.

Let $x\in P_j$ and $x\not\in P_{j-1}$, that is, $P_j$ is the first
pivot set that contains $x$. Then we charge $x$ with every
(incoming and outgoing) edge of this type from levels
$[j,j+2\gamma]$ that is incident to $x$. Now given such an edge
$(p,q)$ whose level is $i$, either $x$ or $y$ are not pivots in
level $i-2\gamma$, which means that either $x$ or $y$ has been
charged for this edge, since one of them first becomes a pivot
between levels $i-2\gamma$ and $i$.

The second case is that $\ell(p,q) \leq \ell(q)$. In this case, it
must be that level $\ell(p,q)$  is among the $4\gamma$ first
different levels below $\ell(q)$ from which an incoming edge is
allowed to enter $q$. Subsequently, we associate the edge $(p,q)$
with $q$, as the total number of such edges that $q$ can have is
$O(\gamma / \epsilon^d)$.
\end{proof}

We now turn to prove that the stretch of the spanner $\hat{H}$
with respect to the disk graph $I$ is $1+\eps$.

\begin{lemma}\label{L-Stretch1}
Let $(p,q)$ be an edge of the spanner $H$ that was pruned. We show
that there is a path in $\hat{H}$ whose length is at most
$(1+\eps)\delta(p,q)$.
\end{lemma}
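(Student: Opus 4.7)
My plan is to exhibit an explicit detour for $(p,q)$ that routes through an intermediate vertex $v$ carrying a short kept incoming edge to $q$, using the inflated radius $r_{1+\eps}(p)$ to reach $v$ inside the relaxed graph $I'$. Write $L=\delta(p,q)$ and $i=\ell(p,q)$. Because $(p,q)$ was pruned, at least $4\gamma$ distinct integer levels strictly between $i$ and $\ell(q)$ support kept incoming edges of $q$ in $\hat{E}$; let $k$ be the largest such level, so that $k\ge i+4\gamma$, and fix a kept incoming edge $(v,q)\in\hat{E}$ at level $k$.

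The choice $\gamma=\log_{1+\alpha}(1/\beta)+1$ gives $(1+\alpha)^{4\gamma-1}=(1+\alpha)^{3}/\beta^{4}$, so $\delta(v,q)\le M_{k}\le M_{i+4\gamma}= M_{i+1}\cdot\beta^{4}/(1+\alpha)^{3}\le L\beta^{4}/(1+\alpha)^{3}$. With $\alpha=\beta<\eps/6$ this is at most $\eps'\cdot L$ for an $\eps'$ that can be made an arbitrarily small constant fraction of $\eps$. Combined with $r(p)\ge L$, the triangle inequality now yields $\delta(p,v)\le L+\eps'L=(1+\eps')L\le(1+\eps)r(p)=r_{1+\eps}(p)$, so $(p,v)\in E'$. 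Applying the stretch guarantee of Section~\ref{s:DG} to the spanner $H'$ of $I'$ on the edge $(p,v)$ produces a $p$-to-$v$ path $\pi$ in $H'$ of length at most $(1+\eps')^{2}L$; concatenating $\pi$ with the kept edge $(v,q)$ gives a $p$-to-$q$ walk in $H\cup H'$ of length at most $(1+\eps')^{2}L+\eps'L$, which is bounded by $(1+\eps)L$ once $\eps'$ is chosen a constant factor below the target stretch.

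The principal obstacle is that $\pi$ itself may traverse edges that were pruned from $H\cup H'$, so its length inside $\hat{H}$ is not immediately bounded by its length in $H'$. I plan to resolve this by inducting on the level of the edge being approximated, applying the same detour construction to every pruned edge encountered along $\pi$. The bound $\delta(v,q)\le\eps'L\ll L$ is what drives the induction: each recursive invocation either operates on an edge whose level pushes it into the kept range of the pruning rule for its target, or it operates on an edge strictly higher in the hierarchy than the original, so the process terminates with a walk entirely in $\hat{E}$ whose total length remains within $(1+\eps)L$.
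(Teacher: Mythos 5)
Your setup---using the pruning rule to extract a kept incoming edge $(v,q)$ at a level $k\ge \ell(p,q)+4\gamma$, so that $\delta(v,q)$ is a tiny fraction of $\delta(p,q)$, and then using the inflated radius to get $(p,v)\in E'$---is sound, and the final hop $(v,q)$ is unproblematic. The gap is in the leg from $p$ to $v$. The $H'$-spanner path certifying the edge $(p,v)$ has the form $\langle p,\hat x,\hat y,v\rangle$ (or $\langle p,\hat x,v\rangle$), where the two outer legs have length $O(\beta M_{i+1})$ but the middle edge $(\hat x,\hat y)$ has length up to $M_i$, i.e.\ essentially $\delta(p,q)$ itself, and its survival under pruning is decided by the kept window of its \emph{target} ($v$ or a point of $v$'s close neighborhood), over which you have no control: $v$ was chosen only so that $\delta(v,q)$ is small, and nothing prevents all of the target's kept long levels from lying far above level $i$, in which case $(\hat x,\hat y)$ is pruned. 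Your proposed fix---recurse on pruned edges of $\pi$---does not make progress: the problematic edge $(\hat x,\hat y)$ sits at essentially the same level as $(p,q)$, not ``strictly higher in the hierarchy,'' and moreover the $(1+\eps)$ radius slack is already spent in reaching $v$, so the relaxed-radius trick cannot be reapplied to it with a further-perturbed target. The recursion therefore has no termination measure and no bound on the accumulated stretch.

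The paper closes exactly this gap with the chain $q_1=q$, $q_{j+1}=s(q_j)$ of kept incoming edges at levels separated by $\gamma$: it stops at the first $q_t$ with $\delta(q_t,q)>\beta\delta(p,q)$ and proves the key claim $\delta(q_t,q_{t-1})\ge\frac{\beta}{2}\delta(p,q)$. The proxy is then $q_{t-1}$, not your $v$: since $q_{t-1}$ has a kept incoming edge of length at least $\frac{\beta}{2}\delta(p,q)$ already at its $\gamma$-th kept level, an edge of length about $(1+\beta)\delta(p,q)$ entering $q_{t-1}$ (or its close neighborhood) lies within the first $4\gamma$ kept levels of its target and is guaranteed to survive pruning. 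That quantitative lower bound on a kept edge of the proxy is the ingredient missing from your argument; the induction (on lengths of pruned edges, not on levels) is then needed only for the two short outer legs, which are genuinely shorter than $\delta(p,q)$, so it is well founded.
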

\begin{proof}
The proof is by induction on the lengths of the pruned edges. For
the induction base let $(p,q)$ be the shortest edge that was
pruned. For every $x\in V$, let $s(x)$ be the head of an edge
whose level is the $\gamma$-th level below $\ell(x)$ from which
$x$ has an incoming edge. Let $q_1,\ldots q_i, \ldots$ be a
sequence of points, where  $q_1 = q$ and $q_i = s(q_{i-1})$. As
$q_{i+1} = s(q_{i})$, it follows that $\ell(q_{i+1},q_{i})\leq
\ell(q_{i})-\gamma$. Combining this with the fact that
$\ell(q_i)\leq \ell(q_i,q_{i-1})$ we get that
$\ell(q_{i+1},q_i)\leq \ell(q_i,q_{i-1})-\gamma$. Therefore,
$\delta(q_i,q_{i-1})\leq \beta \delta(q_{i+1},q_i)$.

The analysis distinguishes between two cases.

{\bf Case a:} There is a point $q_t$ such that $\delta(q_t,q)> \beta
\delta(p,q)$. This situation is depicted in
Figure~\ref{F-proofexample}. (If there is more than one point that
satisfies this requirement, take the one whose index is minimal.)

\par\bigskip\noindent {\bf Claim:}
$\delta(q_t,q_{t-1})\geq \frac{\beta}{2} \delta(p,q)$.

\begin{proof}
For the sake of contradiction, assume that
$\delta(q_t,q_{t-1})<\frac{\beta}{2}\delta(p,q)$. This implies
that
\begin{equation}
\label{eq:qtqt-1} 2\delta(q_t,q_{t-1}) ~<~ \beta\delta(p,q) ~<~
\delta(q_t,q) ~\leq~ \sum_{i=2}^{t} \delta(q_{i}, q_{i-1})~,
\end{equation}
where the last inequality follows from the triangle inequality as
the distance between $q$ and $q_t$ is at most $\sum_{i=2}^{t}
\delta(q_{i-1}, q_{i})$.
For every $2\leq i\leq t-1$ we have $\delta(q_i,q_{i-1}) \leq
\beta \delta(q_{i+1},q_i)$, which implies that
$\delta(q_i,q_{i-1}) \leq \beta^{t-i} \delta(q_t , q_{t-1})$.
Combined with (\ref{eq:qtqt-1}), we get
$$\delta(q_t,q_{t-1}) ~<~ \sum_{i=2}^{t-1} \delta(q_i,q_{i-1}) ~\leq~
\delta(q_t ,q_{t-1}) \sum_{i=2}^{t-1}  \beta^{t-i}~.$$ If $\beta <
1/2$ we have $\sum_{i=2}^{t-1}  \beta^{t-i}<1$ and this yields a
contradiction.
\end{proof}

We now focus our attention on the point $q_{t-1}$.  The minimality
of $q_t$ implies that $\delta(q,q_{t-1})\leq \beta \delta(p,q)$.
By combining it with the triangle inequality we get that
$\delta(p,q_{t-1})\leq \delta(p,q)+\beta\delta(p,q)$. Therefore,
in the graph $I'$ there must be an edge from $p$ to $q_{t-1}$.

\begin{figure}[!t]
\begin{center}
\input{proof-small_m.pstex_t}
\end{center}
\caption{The case in which $q_t$ exists} \label{F-proofexample}
\end{figure}

Let $i=\ell(p,q_{t-1})$. There are two possible scenarios for the
spanner $H'$. The first scenario is when $r'(q_{t-1}) < M_{i+1}$.
In this case, there is an edge in $H'$ from some $x \in
\Gamma_{i}(\NN(p,i))$ to $q_{t-1}$, whose length is at most
$\delta(p,q)+\beta\delta(p,q)$.

There are $4\gamma$ different levels below $\ell(q_{t-1})$ from
which edges that belong to the spanners $H$ and $H'$ are not being
pruned and survived to the spanner $\hat{H}$. We know that the
edge $(q_t,q_{t-1})$ is such an edge from the $\gamma$-th
non-empty level below $\ell(q_{t-1})$. We also know that
$\delta(q_t,q_{t-1})>\frac{\beta}{2} \delta(p,q)$. Therefore, as
the length of the edge $(x,q_{t-1})$ is at most
$\delta(p,q)+\beta\delta(p,q)$ it is within the $4\gamma$
non-empty levels below $\ell(q_{t-1})$ and it is not pruned. We
can now build a path from $p$ to $q$ by concatenating three
segments as follows: A path from $p$ to $x$, the edge
$(x,q_{t-1})$ and a path from $q_{t-1}$ to $q$. The point $x$ is
at most $2\beta\delta(p,q)+2\beta^2\delta(p,q)$ away from $p$ and
for the right choice of $\beta$ it is less than
$\delta(p,q)/(1+\eps)$, hence the weight of every edge on the path
that approximates the distance between $x$ and $p$ in $H\cup H'$
is less than $\delta(p,q)$, the shortest pruned edge, and the
entire path survived the punning stage. Similarly, the point
$q_{t-1}$ is at most $\beta\delta(p,q)$ away from $q$ and again
for the right choice of $\beta$ every edge on the path that
approximates the distance between $q_{t-1}$ and $q$ survived the
punning stage. Thus, we get that there is a path whose length is
at most
$$(1+\eps)(3\beta\delta(p,q)+2\beta^2\delta(p,q))+\delta(p,q)+\beta\delta(p,q)~,$$
which is less than $(1+\eps)\delta(p,q)$ for $\beta<\eps/11$.

The second scenario is when $r'(q_{t-1}) \geq M_{i+1}$. In this
case, there is an edge in $H'$ from some $x \in
\Gamma_{i}(\NN(p,i))$ to some $y\in \Gamma_{i}(\NN(q_{t-1},i))$
whose length is at most $\delta(p,q)+\beta\delta(p,q)$, which is
not being pruned. We can build a path from $p$ to $q$ by
concatenating three segments as follows: A path from $p$ to $x$,
the edge $(x,y)$ and a path from $y$ to $q$. As before, for the
right choice of $\beta$ the paths from $p$ to $x$ and from $y$ to
$q$ are composed from edges that are shorter from $\delta(p,q)$,
the length of the shortest pruned edge, hence, from the minimality
$\delta(p,q)$ every edge on these paths survived the punning
stage. We get that there is a path whose length is at most
$$(1+\eps)(4\beta\delta(p,q)+5\beta^2\delta(p,q))+\delta(p,q)+
\beta\delta(p,q)~,$$
which is less than $(1+\eps)\delta(p,q)$ for $\beta<\eps/19$. This
completes the proof for case a.

{\bf Case b:} There is no point $q_t$ such that $\delta(q_t,q)> \beta
\delta(p,q)$.
In this case, let $q_{t-1}$ be the last point in the sequence of
points $q_1,\ldots q_i, \ldots$, where $q_i = s(q_{i-1})$ and $q_1
= q$. Similarly to before, there are two possible scenarios for
the spanner $H'$. Let $i=\ell(p,q_{t-1})$. The first scenario is
when $r'(q_{t-1}) < M_{i+1}$. In this case, there is an edge in
$H'$ from some $x \in \Gamma_{i}(\NN(p,i))$ to $q_{t-1}$ whose
length is at most $\delta(p,q)+\beta\delta(p,q)$. This edge could
not be pruned, since if it was pruned then $q_{t-1}$ could not
have been the last point in the sequence. Hence we can construct a
path from $p$ to $q$ exactly as we have done in the first scenario
of case a, described above. The second scenario is when
$r'(q_{t-1}) \geq M_{i+1}$. In this case, we can construct a path
from $p$ to $q$ exactly as we have done in the second scenario of
case a, described above.

This completes the proof of the induction base. The proof of the
general inductive step is almost identical. The only difference is
that when a path is constructed from $p$ to $q$, its portions from
$p$ to $x$ and from $q_{t-1}$ to $q$ in the first scenario and
from $p$ to $x$ and from $y$ to $q$ in the second scenario exist
in $\hat{H}$ by the induction hypothesis and not by the minimality
of $\delta(p,q)$.
\end{proof}

We end this section by stating its main Theorem. The proof of this
Theorem stems from Lemma~\ref{L-smaller} and
Lemma~\ref{L-Stretch1}.

\begin{theorem}
Let $(V,\delta)$ be a metric space of constant doubling dimension
with a radius assignment $r(\cdot)$ for its points and let
$I=(V,E,r)$ be its corresponding disk graph. Let
$I'=(V,E',r_{1+\eps})$ be the corresponding disk graph in the
relaxed model. It is possible to create a $(1+\eps)$-spanner of
size $O(n/\eps^d)$ for $I$ using edges of $I'$.
\end{theorem}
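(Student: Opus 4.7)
The plan is to assemble the two preceding lemmas directly, with a small amount of parameter bookkeeping at the end. First, the size bound $|\hat{E}| = O(n/\eps^d)$ is immediate from Lemma~\ref{L-smaller}, so no further work is needed for the sparsity claim.

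For the stretch, I would fix an arbitrary edge $(x,y) \in E$ and exhibit a short $x$-to-$y$ walk in $\hat{H}$. Since the spanner $H$ returned by Algorithm \SPAN\ on $I$ is a $(1+\eps')$-spanner of $I$ by the Stretch Lemma of Section~\ref{s:DG} (valid when $\alpha = \beta < \eps'/6$), there is a path $\pi$ from $x$ to $y$ in $H$ of length at most $(1+\eps')\delta(x,y)$. The edges of $\pi$ that survived the pruning stage remain in $\hat{H}$, while for each pruned edge $(p,q)$ along $\pi$, Lemma~\ref{L-Stretch1} supplies a replacement path in $\hat{H}$ of length at most $(1+\eps')\delta(p,q)$. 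Concatenating the surviving pieces with these replacement paths yields a walk from $x$ to $y$ in $\hat{H}$ of total length at most $(1+\eps')^2\,\delta(x,y)$.

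It remains only to rescale parameters. All of the constraints collected from the Stretch Lemma and from the proof of Lemma~\ref{L-Stretch1} have the form $\beta < c\eps'$ for various positive constants $c$ (e.g.\ $\beta < \eps'/6$, $\beta < \eps'/11$, $\beta < \eps'/19$), so one may simultaneously set $\alpha = \beta$ to a suitably small constant multiple of $\eps'$, and then take $\eps' = \eps/3$ (say) so that $(1+\eps')^2 \le 1+\eps$. This gives the desired $(1+\eps)$-spanner of $I$ of size $O(n/\eps^d)$ using only edges of $I'$. The main conceptual work has already been carried out inside Lemma~\ref{L-Stretch1}; the final step here is essentially gluing together what the two lemmas provide and tuning the approximation parameter, so I do not expect any significant obstacle beyond routine constant chasing.
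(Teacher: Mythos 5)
Your proposal is correct and follows essentially the same route as the paper, which simply derives the theorem by combining Lemma~\ref{L-smaller} (for the size bound) with Lemma~\ref{L-Stretch1} (for the stretch of pruned edges); your explicit composition of surviving edges with replacement paths and the final rescaling of $\eps$ is just a more careful spelling-out of what the paper leaves implicit.
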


\section{Concluding remarks and open problems}
\label{s:con}

This paper presents a spanner construction for disk graphs in a
slightly relaxed model that is as good as spanners for complete
graphs and unit disk graphs. This result opens many other research
directions for disk graphs. We list here two questions that we
find particularly intriguing: Is it possible to design an
efficient compact routing scheme for disk graphs? And is it
possible to build an efficient distance oracle for disk graphs?

\bibliographystyle{plain}
\bibliography{peleg}

\end{document}